\newcommand{\Rspace}        {\mathbb R}
\newcommand{\Zring}        {\mathbb Z}
\newcommand{\inter}        {\mathcal{I}}
\newcommand{\link}         {\mathcal{L}}
\newtheorem{Lemma}{Lemma}
\newtheorem{Theorem}{Theorem}
\newenvironment{proof}{Proof}
\begin{document}

\title{On the links of vertices in simplicial $d$-complexes embeddable in the euclidean $2d$-space\thanks{This work was done in part while the author was at
Fondation Sciences Math\'ematiques de Paris and D\'epartement d'Informatique, \'Ecole normale sup\'erieure, Paris, France. Part of this work was done while the author was at IST Austria.}}

%\subtitle{Do you have a subtitle?\\ If so, write it here}

%\titlerunning{On links of vertices in embeddable simplicial $d$-complexes}        % if too long for running head

\author{Salman Parsa}

%\authorrunning{Short form of author list} % if too long for running head

\date{}

\maketitle

\begin{abstract}
 We consider $d$-dimensional simplicial complexes which can be PL embedded in the $2d$-dimensional euclidean space. In short, we show that
in any such complex, for any three vertices, the intersection of the link-complexes of the vertices is linklessly embeddable in the $(2d-1)$-dimensional euclidean space.
In addition, we use similar considerations on links of vertices to derive a new asymptotic upper bound on the total number of $d$-simplices in an (continuously) embeddable complex in $2d$-space with $n$ vertices, improving known upper bounds, for all $d \geq 2$. Moreover, we show that the same asymptotic bound also applies to the size of $d$-complexes linklessly embeddable in the $(2d+1)$-dimensional space.

\end{abstract}

\section{Introduction and Overview of Results} \label{section:intro}

There is great interest in understanding properties of simplicial complexes that are embeddable in a certain euclidean space. The basic case
is 1-dimen\-sional complexes that are embeddable in the plane, i.e., graphs that can be drawn on the plane without crossings between the edges. Planar
graphs are very well-understood. For instance, it is easily shown that if a planar graph has $n$ vertices it has at most $3n-6$ edges. However, roughly speaking, how ``dense'' a simplicial $d$-complex embeddable in $\Rspace^{2d}$ can be, is less understood for arbitrary $d$.
In this paper we show certain properties of embeddable complexes that, for instance, can be used to give an upper bound for this density problem. We note that
any simplicial $d$-complex can be embedded in a simplex-wise linear way in $\Rspace^{2d+1}$. However, for any $d \geq 1$, there exist simplicial $d$-complexes non-embeddable
in $\Rspace^{2d}$.

The property we prove in this article involves the notion of the link-complex of a vertex and linking of spheres in euclidean spaces (and more generally of algebraic cycles).
We begin by considering the simplest (but perhaps the hardest) case $d=2$. We consider first 2-complexes in $\Rspace^3$.
Let $K$ be a $2$-complex. The link-complex\footnote{This subcomplex is usually called the link of the vertex. In this paper, we call it the link-complex or the link-graph to prevent confusion with other usages of the word ``link".} of a vertex is the maximal $1$-subcomplex (a graph) whose join with the vertex is a subcomplex of $K$ (we give definitions in later sections).
Sometimes the embeddability of the complex provides restrictions on possible link-graphs. The following is well-known, see \cite{DeEd94}.
Assume that the complex $K$ is sitting in $\Rspace^3$, i.e., simplex-wise linearly embedded. Then, if we consider small enough balls around each vertex,
we can observe that the intersection of the boundary of a ball with the complex $K$ is a drawing of the link-graph of the vertex on the $2$-sphere.
Now a planar graph with $n$ vertices has at most $3n-6$ edges, hence the total number of edges in all link-graphs is at most $n(3(n-1)-6) = 3n^2 -9n$.
Since each triangle is counted three times it follows that such a complex $K$ over $n$ vertices has at most $n^2-3n$ triangles. A complex embedded in $\Rspace^3$ and with $\Omega(n^2)$ triangles
can be constructed by putting $n$ vertices on each of two skew lines in 3-space and then taking the Delaunay triangulation of the point set; it will consist of $\Omega(n^2)$ tetrahedra.
Alternatively, one can take the boundary of the $4$-dimensional cyclic polytope, remove a single facet and embed the result in $\Rspace^3$.

If we know that $K$ embeds in $\Rspace^4$, in general no restriction is imposed on the link-graph of a vertex. To see this, take an arbitrary graph in some $\Rspace^3 \subset \Rspace^4$
and ``cone'' this graph from a vertex on one side of the $3$-plane. Hence, arbitrary graphs appear as link-complexes of embeddable $2$-complexes. We can add another vertex on the other
side of the 3-plane and cone again. Thus the intersection of two link-graphs can be an arbitrary graph.
However, there are global restrictions on the set of all link-complexes and the above process cannot be continued. In brief, we show that in a PL embeddable $2$-complex in $\Rspace^4$, for any triple of distinct link-graphs, their common intersection (or a \textit{triple intersection}) is a linklessly embeddable graph. Informally, a linklessly embeddable graph is one that can be ``drawn'' in space without links between disjoint circles. Figure \ref{fig:linking} shows some
examples of links in euclidean spaces between spheres of the right dimensions.

\begin{figure}

\includegraphics[scale=0.8]{./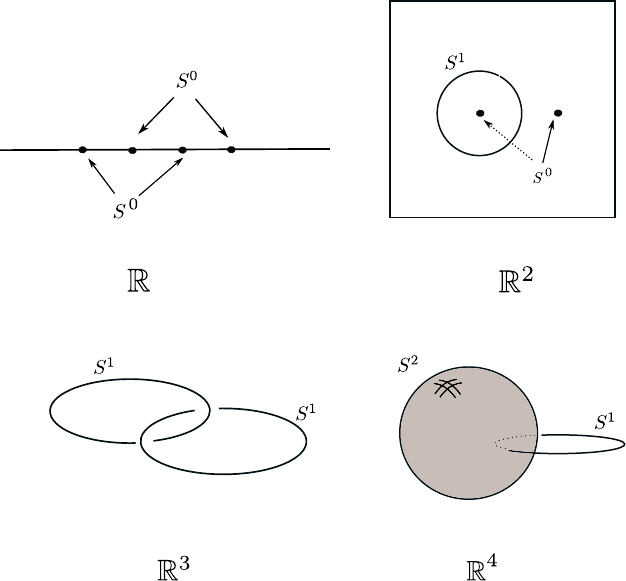}
\centering
% : 233x410 pixel, 72dpi, 8.22x14.46 cm, bb=0 0 233 410
\center{\caption{linking of spheres in various dimensions}}
\label{fig:linking}

\end{figure}

An interesting property of our main observation is that the same proof works for all dimensions. That is, we
show that for any PL embeddable $d$-complex, each triple intersection of link-complexes is a linklessly embeddable $(d-1)$-complex (in $\Rspace^{2d-1}$). More formally,

\begin{Theorem}\label{theorem:basic}
Let $\iota: |K| \rightarrow \Rspace^{2d}$ be a PL embedding of the $d$-complex $K$. If $L$ is a triple intersection of link-complexes of vertices of $K$, then $\iota_v : |L| \rightarrow S(v)$ embeds $L$ linklessly, for any $v$
that contains $L$ in its link-complex, where $\iota_v$ is the restriction to $|L|$ of the embedding of the (underlying space of the)  link-complex of $v$ into a small sphere $S(v)$ centered at $\iota(v)$.
\end{Theorem}

\paragraph{Remark}\footnote{Added after publication.}
If one is interested only in proving this theorem, one can indeed provide simpler proofs, for instance by using simple topological facts as used in \cite{Sk14,Sko03}. It is essentially an exercise in linking numbers. However, we deduce this theorem from a more general fact that proves equality of two linking numbers in opposite "link-complexes" up to sign, not just them being zero or non-zero, see Lemma 1. We chose this to get the most out of the proof technique. This needs translating intersections inside a 3-sphere in $\Rspace^4$ to intersection numbers between two 2-chains in $\Rspace^4$. This is the main point of the seemingly technical proof of Lemma 1. This lemma also is not difficult to prove using only properties of intersection and linking numbers. The basic idea of proving non-embeddability by presenting disjoint disks that bound linking cycles can also be found in \cite{Sko03}, Example 2. This idea was rediscovered by the author. \footnote{
The author is grateful to A. Skopenkov for notifying him that translating realizability in $\Rspace^4$ to linking numbers in the link-complex seems to have been done for the first time in \cite{Sko03}. The author reached this idea in his efforts to generalize the restriction on link-complexes of 2-complexes in 3-space, as explained in the introduction. 
Moreover, A. Skopenkov has notified the author that he exposed the ideas of [18] in a talk on [23] in August 2015 at IST Austria. The author attended to this talk.   However, the idea of the proof of Lemma 1, is from November 2014, and before the authors PhD defense.}
	
\vspace{0.75cm}

The above is true even for $d=1$. In a planar graph, triple intersections of link-complexes (subsets of vertices)
have at most three vertices each. This is because four points on a real-line always allow a link between two 0-dimensional spheres. Obviously,
in this case the graph would contain a $K_{3,3}$ otherwise, and hence would be non-planar. Of course, this bound is not tight since we know that
the triple intersections have at most two points. However, this example shows the application of the results of this article to the case $d=1$. This theorem is proved using only elementary properties of intersection homomorphism and linking numbers in $\Rspace^{2d}$.

In the case $d=2$ a stronger fact is true. A triple intersection of links is not only linklessly embeddable but actually is a planar graph\footnote{This improvement was noted by Uli Wagner.}.
The proof of planarity in the case $d=2$ uses the characterization of planar graphs by forbidden minors, and so is not directly generalized to higher dimensions, see the last section for a discussion about planarity.

These observations lead us to derive a new upper bound on the number of  $d$-simplices of embeddable $d$-complexes with $n$ vertices. For a simplicial complex $K$,
let $f_i(K)$ be the number of $i$-dimensional simplices of $K$. Denote by $f_d(n)$ the maximum number of $d$-simplices of an continuously\footnote{Our Theorem 1 is proved only for PL embeddings, however, Theorem 2 and its analog for linkless embeddings are proved for continuous embeddings. Since only these latter theorems are used to derive upper bounds, we define $f_d(n)$ with respect to continuous maps.} embeddable $d$-complex with $n$ vertices in $\Rspace^{2d}$.
The problem of determining or bounding $f_d(n)$ is a major open problem. For the case of (boundaries of simplicial) convex polytopes, by the famous Upper Bound Theorem, the $f$-vector
is always bounded above by the $f$-vector of the cyclic polytope, see \cite{McM70} and \cite{ZiegP}. This result has been strengthened to include all complexes homeomorphic to the boundary of a convex polytope, see \cite{Sta96}.
We note that deriving asymptotic tight bounds for all these cases is much easier by using the vanishing of the Betti numbers and the Euler relation. For instance, in the case $d=2$, the Euler relation
states that $\beta_0(K) - \beta_1(K)+\beta_2(K) = f_0(K) - f_1(K) + f_2(K)$ in a simplicial complex. Hence $f_2(K)$ asymptotically is dominated by $f_1(K) + \beta_2(K)$.

It is conjectured by many that the same (at least) asymptotic bounds that are true for $d$-simplices in the Upper Bound Theorem are also true for $f_d(n)$, this means
$f_d(n) = \Theta(n^{d})$. However, the best bound in the literature is $f_d(n) = O(n^{d+1 - 1/3^d})$, and this bound was the best known for any $d>1$. This fact is proved by forbidding some
non-embeddable subcomplexes. This bound is first mentioned in \cite{Dey93}, see \cite{Gun09} or \cite{Erd64,Wag11} for a proof, and \cite{Dey93} for an application. In this paper, in Theorem \ref{theorem:main}, we improve this bound to $f_d(n) = O(n^{d+1 - {1}/3^{(d-1)}})$. We prove the new bound in general dimension using further non-embeddability results of Gr\"unbaum \cite{Gru69}. Theorem \ref{theorem:basic} and its proof which is independent of the dimension give us a necessity condition for embeddability of $d$-complexes in $2d$-space. However, Theorem \ref{theorem:basic}, or the stronger planarity result mentioned above, help directly improve the bound on the number of simplices only for $d=2$.

We also show that a similar asymptotic bound (with different constant) is true for the number of $d$-simplices in $d$-complexes that can be linklessly embedded in $\Rspace^{2d+1}$. This is proved using the results in \cite{Sko03}. There exist also bounds on
these complexes by forbidding ``bad'' subcomplexes, see \cite{Tuf13,Sko03} for instances of small non-linklessly embeddable complexes.

It is shown in \cite{Wag11} that a (suitably defined) random $d$-complex embeddable in $\Rspace^{2d}$ has asymptotically almost surely $f_d(K) < C f_{d-1}(K)$ for some constant $C$. So the conjecture is true for almost
all complexes. The general belief is that the current upper bounds are far from the truth. Nevertheless, we think this paper leads to a better understanding of embeddable complexes.

It is conjectured that for $2$-complexes that are embeddable in $\Rspace^4$ one has $f_2(K)\leq 4 f_1(K)$. Inequalities of this type where considered by Gr\"unbaum \cite{Gru70}. This inequality is called Gr\"unbaum conjecture, also sometimes Kalai's conjecture. We believe this paper is a step towards proving this conjecture.

\paragraph{organization of the paper}
We presented an overview of the results in Section \ref{section:intro} above. In Section \ref{section:background}, we
briefly explain the necessary background material and definitions. This section serves also to set up our notation.
In Section \ref{section:main} we state formally the lemma on linklessness of the triple intersection of link-complexes and prove it. Next, we prove the stronger fact for $d=2$ on planarity of the triple intersection of link-graphs. Then, in Section \ref{ssection:bound} we derive the bounds on the number of simplices. The argument makes use of a combinatorial lemma which is proved in \ref{ssection:comb}.

\section{Basic Concepts} \label{section:background}
In this section we recall some definitions and briefly review some preliminary facts used later in the paper. By a simplicial complex $K$ we mean an abstract complex, i.e., a set system over a finite set of \emph{vertices} satisfying the usual condition that any subset of an element of $K$ is also in $K$. Let $\#\sigma$ denote the number of elements of the set $\sigma$.  If $\sigma \in K$, its \emph{realization} $|\sigma|$ is a simplex in a euclidean space that has $\#\sigma$ points in general position as vertices. The dimension of $\sigma$ is the dimension of its realization, i.e., $\#\sigma-1$. The dimension of $K$ is the largest of the dimensions of its simplices. A realization of $K$ in a euclidean space is defined as follows. For each vertex choose a point of the space with the condition that all the simplices of $K$ are simultaneously realized, and moreover, the realizations of disjoint simplices are disjoint. The subset of the euclidean space which is the union of realizations of simplices is a \emph{realization} of $K$ in the euclidean space. In fact, a realization always exists, and, with the induced topology from the euclidean spaces these realizations are homeomorphic. Hence, there is a canonical topological space defined for $K$ which is called the \textit{underlying space} of $K$ and denoted by $|K|$. We write $V(K) = \{ v_1, \ldots, v_n \}$ for the set of vertices. The \emph{$k$-skeleton} of $K$, i.e., the subcomplex made of simplices of dimension up to $k$, is denoted by $K_k$. We also assume the empty element of $K$ has dimension $-1$.

\paragraph{stars and links}

The \emph{star} of a simplex $\sigma \in K$ is the set $st(\sigma) = \{ \tau \in K, \sigma \subset \tau \}$. The \emph{closed star} of $\sigma$, $St(\sigma)$, is the smallest subcomplex of $K$ containing $st(\sigma)$. The complex $St(x) - st(x)$ is called the \emph{link-complex} of $\sigma$ and denoted $L(\sigma)$. The closed star is the join of $L(\sigma)$ with $\sigma$.

We work with link-complexes of vertices only. The stars of vertices cover $K$ such that each $k$-simplex, $k>0$, is covered $(k+1)$-times. Also, any $k$-simplex appears in as many link-complexes (of vertices), as the number of its incident $(k+1)$-simplices, or its \emph{degree}. It follows that the link-complexes of vertices cover all simplices of degree at least~$1$.

Let $f_k = f_k(K)$ denote the number of $k$-simplices in $K$, $k= -1, 0, \ldots$.
A $k$-simplex $\sigma \in K$ is determined by giving one of its vertices $v$ and the $(k-1)$-simplex of $L(v)$ whose join with $v$ is $\sigma$. Each
$k$-simplex is determined in $k+1$ different ways. Therefore, in general, the numbers $f_k$, $k \geq 0$, satisfy

\begin{equation} \label{formula:faces}
(k+1) f_k = \sum_{i=1}^{n} f_{k-1}(L(v_i)).
\end{equation}

\paragraph{notions of embeddings}
There exist various types of embedding into euclidean spaces. A continuous injective map is the most general notion. And the narrowest concept
for our purposes is the \textit{simplex-wise linear} embedding. This is the same as realizing the complex in the required euclidean space.
A \textit{piece-wise linear} (PL) embedding is one that is a simplex-wise linear embedding after finitely many (barycentric) subdivisions.
Since a closed simplex is compact, a continuous map can be approximated by a PL map such that the images of two (vertex) disjoint simplices
are disjoint. Such a PL map is called an \textit{almost embedding}. It would be interesting to know if the linklessness condition of Theorem \ref{theorem:basic} can be extended
to almost embeddings.

\paragraph {embeddings of link-complexes}
The concepts and notations in this paragraph are used throughout the paper and are important for us.
Let $K$ be a $d$-complex with a PL map $\iota: |K| \rightarrow \Rspace^{2d}$. Put a ball of small radius $\epsilon>0$ around the point $\iota(v_i)$, denoted by $B(v_i)$, and write $S(v_i)$ for its boundary $(2d-1)$-sphere. If we choose $\epsilon$ small enough then $S(v_i) \cap \iota(|K|)$ defines an embedding of the link-complex of $v_i$ into the sphere $S(v_i)$ and hence into $\Rspace^{2d-1}$. All the embeddings that are achieved in this way on spheres of different (small) radii are isotopic to each other. We refer to such an embedding when we say \emph{the embedding of the link-complex of} $v_i$, $i=1,\ldots,n$, and denote it by $\iota_{v_i} : |L(v_i)| \rightarrow S(v_i)$.

\paragraph{chains in spaces}
We will need familiarity with the very basic notions of chain complexes. In this paragraph we merely fix notation and refer to \cite{Hat01,Mun84,SeTh34} or any textbook of algebraic topology for complete definitions.

A singular $k$-dimensional simplex in a space $X$ is a (continuous) map $\sigma : |\Delta^k| \rightarrow X$, where $\Delta^k$ is a standard oriented $k$-simplex. The $k$-th singular chain group of $X$, $C_k(X)$ is a free abelian group generated by all singular simplices, where $-\sigma$ is $\sigma$ with the oppositely oriented domain simplex. The elements $c \in C_k(X)$ are called singular $k$-chains and they can be written as finite sums $c = \sum_i n_i \sigma_i$ where the $n_i$ are integers and the $\sigma_i$ are singular simplices.
When $X$ is the underlying space of a simplicial complex, one can in the above definition replace a singular simplex by a fixed linear homeomorphism of the standard oriented simplex onto a target simplex of the same dimension. Those homeomorphisms that preserve the orientation are declared equal and denoted by $\sigma$, and those which reverse the orientation are also declared equal and denoted by $-\sigma$, where $\sigma$ is the target simplex. The resulting chains are called simplicial chains. Hence a simplicial chain is a finite summation of oriented simplices with integer coefficients, and it can be viewed as a subset of all the singular chains closed under group operations. We also set $C(X) = \bigoplus_k C_k(X)$.

For each $k$, there exists a homomorphism $\partial_k : C_k(X) \rightarrow C_{k-1}(X)$ called the \textit{boundary homomorphism}. We refer to basic algebraic topology textbooks for the definitions. Intuitively, in the case of simplicial chains, $\partial_k$ assigns
to each generator the sum of its boundary codimension 1 simplices with proper signs. In the singular case, the boundary homomorphism assigns to a singular simplex (that is, a generator) a combination of restrictions of the singular simplex to the boundary simplices; the singular boundary is the image of the simplicial boundary of the domain.
A chain is said to \emph{bound} if there exists a higher dimensional chain that maps to it by $\partial_*$. A chain is a \emph{cycle} if its boundary is zero. We denote the group of $k$-cycles by $Z_k(X)$. We say two chains $c_1$ and $c_2$ are \emph{disjoint} if their images are disjoint.
A map $f$ between spaces induces homomorphisms on chain complexes which we denote by $f_\sharp$.
By $|c|$ we denote the image of the singular chain $c$, or its \emph{carrier}. Note that, if $c$ is a $k$-dimensional simplicial chain of a simplicial complex, then $|c|$ is the union of $k$-simplices that have non-zero coefficient in the unique presentation of $c$ in the basis formed by all of the $k$-simplices.

\paragraph{intersection homomorphism and linking numbers}
We make use of some elementary facts about intersection and linking numbers of chains in a euclidean space $\Rspace^d$ or in the sphere $S^d$. Here we present an overview on these important tools from algebraic topology. For proofs of these properties we refer to \cite{SeTh34,Lef56}. In $\Rspace^d$, for some integer $d>0$, the \emph{intersection number} of two singular chains $c_p\in C_p(\Rspace^d), c_{d-p} \in C_{d-p}(\Rspace^d)$ is an integer defined whenever $\partial c_p$ is disjoint from $c_{d-p}$, and $\partial c_{d-p}$ is disjoint from $c_p$, and moreover, the maps are ``nice'', see \cite{SeTh34}. It is enough for our purposes to restrict to pairs of singular chains that intersect finitely many times and transversely at each intersection point. Intuitively, the intersection number, $\inter(c_p,c_{d-p})$, counts the number of transverse intersections with proper signs. We next present a more formal introduction to the intersection numbers of chains in manifolds.

Let $M$ be an orientable closed triangulated $d$-manifold. Then, it is well-known that there exist dual cellular subdivisions \footnote{A cellular subdivision is a subdivision into polyhedral cells, instead of simplices in a triangulation. Cellular chains are defined analogously to simplicial chains.} for $M$. Let $T_1$ and $T_2$ be cellular subdivisions dual to each other. Orient the $d$-dimensional cells of $T_1$ and $T_2$ coherently, that is, so that the induced orientations on each $(d-1)$-cell are opposite. Since the cellular subdivisions are dual to each other, for any $k$-cell of $T_1$ there exists exactly one $(d-k)$-cell of $T_2$ with non-empty intersection, for $k=0, \ldots,d$. And that intersection is a single point and the intersection is transversal, meaning the two intersecting cells near the intersection point span a $d$-dimensional space.
A dual cellular subdivison to any triangulation can be obtained using a barycenteric subdivision of the triangulation. Then, dual to a $(d-k)$-simplex of the triangulation, is a $k$-cell which is the union of all $k$-simplices of the barycentric subdivision incident on the vertex added on the $(d-k)$-cell -- and used for its subdivision -- that are not inside the $(d-k)$-simplex, for $k=1, \ldots, d$. The dual of a $d$-simplex is the vertex of the subdivision inside it.

Assume now $\epsilon_k \sigma_k$ be an oriented $k$-cell of $T_1$, and, $\epsilon_{d-k} \tau_{d-k}$ be an oriented $(d-k)$-cell of $T_2$ where $\sigma$ is dual to $\tau$, and the $\epsilon_i$'s encode the respective orientations. And let $\epsilon$ encode the orientation of the manifold. That is, changing the orientation of the manifold multiplies $\epsilon$ by $-1$. Note that $\epsilon_k, \epsilon_{d-k}, \epsilon \in \{-1,1\}$.
Then, define the intersection number as
$$\inter(\epsilon_k \sigma_k, \epsilon_{d-k} \tau_{d-k}) = \epsilon_k\epsilon_{d-k}\epsilon.$$

For non-dual pairs of oriented simplices the intersection number is defined to be zero. The intersection number then extends bilinearly to all the cellular chains of $T_1$, $T_2$.

For two singular chains which satisfy the ``niceness" condition mentioned above, it is possible to approximate the two chains by chains in some two (very fine) dual cell subdivisions, and use the above intersection number definition. The standard theory, see e.g. \cite{SeTh34}, shows that the number so obtained is independent of the approximation.

The intersection number is bilinear as long as the terms on both sides are defined,
$$\inter(c_p+c'_p,c_{d-p}) = \inter(c_p,c_{d-p}) + \inter(c'_p,c_{d-p}).$$
Thus the intersection number defines a homomorphism $Z_p(\Rspace^d) \times Z_{d-p}(\Rspace^d) \rightarrow \Zring$. It also passes to homology groups,
that is if $c_p - c'_p = \partial d$ for a chain $d$ then $\inter(c_p,c_{d-p}) = \inter(c'_p, c_{d-p})$ whenever both terms are defined. We will not need this fact though. The crucial fact we use is that in $\Rspace^d$ any two cycles of complementary dimensions $(>0)$ have intersection number $0$. Both of these claims above follow from the following general fact about the intersection numbers, which again is true when the terms are defined, that is, when $\partial c_p$ and $\partial c_{d-p+1}$ are disjoint,

\begin{equation}
\inter(c_p, \partial c_{d-p+1}) = (-1)^{p} \inter(\partial c_p, c_{d-p+1}).
\label{f:b}
\end{equation}

We next define the linking number of two (null-homologous) disjoint cycles $z_p,z_{d-p-1}$ in $\Rspace^d$. Let $c$ be such that $\partial c = z_{d-p-1}$, such a chain $c$ always exists since the homology groups of $\Rspace^d$ are trivial. Then
$$ \link(z_p,z_{d-p-1}) = \inter (z_p,c)$$
is the \emph{linking number} of $z_p$ and $z_{d-p-1}$. It follows from (\ref{f:b}) that the linking number is independent of the choice of $c$ and hence is well-defined. We also note that the linking number in general changes within a homology class.

\paragraph{linklessly embeddable complexes}
Let $L$ be a simplicial $d$-complex and $\iota: |L| \rightarrow \Rspace^{2d+1}$ an embedding, and denote by $\iota_\sharp$ the induced map on chain groups. We call the embedding $\iota: |L| \rightarrow \Rspace^{2d+1}$ \emph{linkless\footnote{The name ``linking'' in this sense is related to the name ``link-complex'' since in a manifold the link-complex is a sphere linked with the original simplex.}}, if for any two disjoint simplicial $d$-cycles $c_1,c_2 \in C_d(L)$ their images $\iota_\sharp(c_1), \iota_\sharp(c_2)$ have linking number zero.  A simplicial $d$-complex is \emph{linklessly embeddable} if there exists a linkless embebdding of it into $\Rspace^{2d+1}$. We remark that there exist other definitions of linklessness of embeddings, but this definition is suitable for our application.

\paragraph{linklessly embeddable graphs}
The Conway-Gordon-Sachs theorem states that the graph $K_6$ is not linklessly embeddable into $\Rspace^3$. It follows that any graph which has a subdivision of $K_6$ as a subgraph is also not linklessly embeddable. It is an old and basic result in extremal graph theory, proved by Mader \cite{Mad68}, that a graph without a subdivision of $K_6$ as a subgraph satisfies $m \leq 4n$, where $n$ is the number of vertices and $m$ is the number of edges. This bound is tight and a graph with $4n + O(1)$ edges is just an apex graph, which is a planar graph together with a new vertex connected to every other vertex. On the other hand, there exists the Robertson-Saymour-Thomas characterization of linklessly embeddable graphs by forbidding the so called Petersen family of graphs as minors, \cite{RST95}.
Since $K_6$ is one of them, the set of linklessly embeddable graphs is contained in the set of $K_6$-minor-free graphs, and bounds for sizes of arbitrary $K_t$-minor-free graphs are also known, \cite{Tho01}.
Moreover, tight bounds on sizes of graphs that do not have a $K_t$ as topological subgraph are also known, \cite{BoTh98,KoSz96}.

\section{Links in Link-complexes} \label{section:main}
In this section we prove our main theorem on the possible link-complexes of vertices of a $d$-complex PL embedded into the euclidean $2d$-space. This theorem gives an obstruction for embeddability of complexes in euclidean spaces based on the complexity of the intersections of link-complexes of vertices.

Let $c \in C(K)$ be a simplicial chain. Whenever $c$ is defined in a link-complex $L(v)$ (i.e., $|c| \subset |L(v)| \subset |K|$) then we have a singular chain ${\iota_v}_\sharp(c)$, which is an embedding of the carrier of $c$ into $S(v)$. Recall that for every vertex $v$, $S(v)$ is a $(2d-1)$-sphere around $v$, which bounds a ball $B(v)$ centered at $v$. We assume the balls are so small that the embedding inside the preimage of the ball $B(v)$ (and of a slightly larger ball) is linear and the image inside the ball consists of a single connected component.

\begin{Lemma}\label{lemma:basic}
Let $\iota: |K| \rightarrow \Rspace^{2d}$ be a PL embedding of the $d$-complex $K$. Let $c_1,c_2 \in Z_{d-1}(K)$ be disjoint simplicial cycles. Give the spheres $S(v_i)$ orientations induced from that of $\Rspace^{2d}$. Assume there is a vertex $v$ such that $\link(\iota_{v\sharp}(c_1), \iota_{v\sharp}(c_2)) = \lambda \neq 0$ in $S(v)$, then $\link(\iota_{w\sharp}(c_1), \iota_{w\sharp}(c_2)) = \pm \lambda$ in $S(w)$, for any vertex $w$ for which $\iota_{w\sharp}(c_1)$ and $\iota_{w\sharp}(c_2)$ are defined. Moreover, if one such  $w \neq v$ exists then none of the chains $c_1, c_2$ can appear in a third link-complex.
\end{Lemma}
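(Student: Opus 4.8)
The plan is to encode the hypothesis ``$c_1,c_2$ lie in the link of a vertex'' by the cone over that vertex, and then to play two such apexes against each other inside the ambient space $\Rspace^{2d}$, where two cycles of complementary dimension must have intersection number zero. Let $V$ be the set of vertices $a$ for which $\iota_a(c_1)$ and $\iota_a(c_2)$ are defined, i.e. $c_1,c_2\subset L(a)$; then $v,w\in V$, and every $a\in V$ satisfies $a\notin|c_1|\cup|c_2|$. For $a\in V$ the cones $a\ast c_1,a\ast c_2\in C_d(K)$ satisfy $\partial(a\ast c_i)=c_i$, because $c_i$ is a cycle and $a\notin|c_i|$. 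Hence
\begin{equation*}
Z_1=v\ast c_1-w\ast c_1,\qquad Z_2=v\ast c_2-w\ast c_2
\end{equation*}
are simplicial $d$-cycles of $K$, and their images $\iota_\sharp Z_1,\iota_\sharp Z_2$ are $d$-cycles of complementary dimension in $\Rspace^{2d}$. By the basic fact recalled above, $\inter(\iota_\sharp Z_1,\iota_\sharp Z_2)=0$.

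Next I would expand this intersection number bilinearly into four terms. The two mixed terms vanish identically: a simplex $v\ast\sigma$ with $\sigma\in c_1$ and a simplex $w\ast\tau$ with $\tau\in c_2$ share no vertex, since $v\neq w$, $v\notin|c_2|$, $w\notin|c_1|$ and $|c_1|\cap|c_2|=\emptyset$; vertex-disjoint simplices have disjoint images and hence intersection number zero, so $\inter(v\ast c_1,w\ast c_2)=\inter(w\ast c_1,v\ast c_2)=0$. For a diagonal term, $v\ast\sigma$ and $v\ast\tau$ meet only in their common face $v\ast(\sigma\cap\tau)=\{v\}$, so $\iota_\sharp(v\ast c_1)$ and $\iota_\sharp(v\ast c_2)$ meet only at $\iota(v)$, and the local intersection number there equals, up to sign, $\link(\iota_v(c_1),\iota_v(c_2))$ in $S(v)$. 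Writing $L_a:=\link(\iota_a(c_1),\iota_a(c_2))$, the vanishing identity becomes $0=\pm L_v\pm L_w$, which gives $L_w=\pm\lambda$ and proves the first assertion.

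The same computation applies verbatim to any pair $a,b\in V$, and once the spheres $S(a)$ carry the orientations induced from $\Rspace^{2d}$ it yields a definite sign, namely $L_a=-L_b$ for all $a,b\in V$. If three distinct vertices $a,b,e\in V$ existed, then $L_a=-L_b$, $L_b=-L_e$ and $L_a=-L_e$ would force $L_e=-L_e=0$, contradicting $\lambda\neq0$. Hence $|V|\le 2$, i.e. there is no third link-complex in which both $c_1$ and $c_2$ appear, which is the moreover-clause.

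I expect the delicate point to be the identification in the second paragraph of the diagonal term $\inter(a\ast c_1,a\ast c_2)$ with the linking number $L_a$ in $S(a)$, together with the sign bookkeeping that makes the third paragraph collapse to the clean relation $L_a=-L_b$. Because the two cones are tangent at the apex $\iota(a)$, I would first displace one apex radially to a nearby point to make the cones transverse, thereby reducing the local count at $\iota(a)$ to the very definition of the linking number, i.e. the intersection inside $B(a)$ of a cone bounding $\iota_a(c_1)$ with $\iota_a(c_2)$. Checking that the orientations induced on the various spheres $S(a)$ produce the \emph{same} relative sign for every apex is the part that requires the most care; by comparison the cross-term vanishing and the cycle construction are routine.
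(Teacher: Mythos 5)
Your proof of the first assertion is essentially the paper's own argument. The paper forms the $d$-cycles $h^w_1-h^v_1$ and $h'^w_2-h'^v_2$, which are your $-Z_1$ and a modified $-Z_2$, kills the cross terms by disjointness of images, and reads off $\inter(h^w_1,h'^w_2)=-\inter(h^v_1,h'^v_2)$ from the vanishing of intersection numbers of complementary-dimensional cycles in $\Rspace^{2d}$. The one technical point you flag --- identifying the diagonal term $\inter(a\ast c_1,a\ast c_2)$ with $\pm\link(\iota_a(c_1),\iota_a(c_2))$ despite the non-transverse tangency at the apex --- is handled in the paper not by perturbing the apex but by surgering one cone: inside $B(a)$ the tip $d'^a_2$ of the second cone is replaced by a chain $d^a_2\subset S(a)$ with $\partial d^a_2=c^a_2$, so that all intersections with the first cone occur on $S(a)$ and are literally the intersections of $c^a_1$ with $d^a_2$, i.e.\ the linking number by definition. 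Either device works; the paper's has the advantage that the sign bookkeeping reduces to checking that the coning orientation is uniform over the cone, which it does explicitly.

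There is, however, a genuine gap in your treatment of the \emph{moreover} clause. The lemma asserts that \emph{neither} $c_1$ \emph{nor} $c_2$ can appear in a third link-complex; your three-apex sign argument only excludes a third vertex $e$ whose link contains \emph{both} cycles (your $V$ is defined by requiring both $\iota_a(c_1)$ and $\iota_a(c_2)$ to exist). This stronger statement does not follow from pairwise relations among elements of $V$. The repair stays inside your framework: if $u$ is a third vertex with only $c_2\subset L(u)$, pair $Z_1=v\ast c_1-w\ast c_1$ against $v\ast c_2-u\ast c_2$; all cross terms and the term $\inter(w\ast c_1,u\ast c_2)$ vanish by vertex-disjointness, leaving $\inter(v\ast c_1,v\ast c_2)=0$, i.e.\ $\lambda=0$, a contradiction. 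Equivalently --- and this is how the paper phrases it --- the cycle $\iota_\sharp Z_1$ has linking number $\pm\lambda\neq0$ with $\iota_\sharp(c_2)$, so $\iota_\sharp(c_2)$ cannot bound any chain disjoint from $|\iota_\sharp Z_1|$; but the image of the cone $u\ast c_2$ would be exactly such a chain.
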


\paragraph{Remark.} Before presenting the proof we mention some points regarding it. 1) In the proof we consider simplicial chains as a special case of singular chains, that is, the simplicial chain complex is considered to be a sub-chain complex of singular chain complex. This is justified since any simplicial chain complex determines in a canonical way a singular chain complex. It is enough to observe this fact for the simplices. They are made into singular chains via the characteristic maps, see e.g. \cite{Hat01}, Chapter 2, see also Section \ref{section:background}.
2) For any vertex $v$, we regard $C(S(v))$ and $C(B(v))$ as sub-chain complexes of $C(\Rspace^{2d})$. This is obviously justified, since the same singular chain can be considered a chain in a larger space.

\begin{proof}
We refer to Figure \ref{fig:notation} for a schematic overview of the notation.
Let $c^u_{i} =\iota_{u\sharp}(c_{i})$, for a vertex $u$ and $i=1,2$.
By assumption, the linking number $\link(c^v_1, c^v_2)$ is defined and is not zero in $S(v)$.

\begin{figure}[h]
 \centering
 \includegraphics[scale=0.6]{./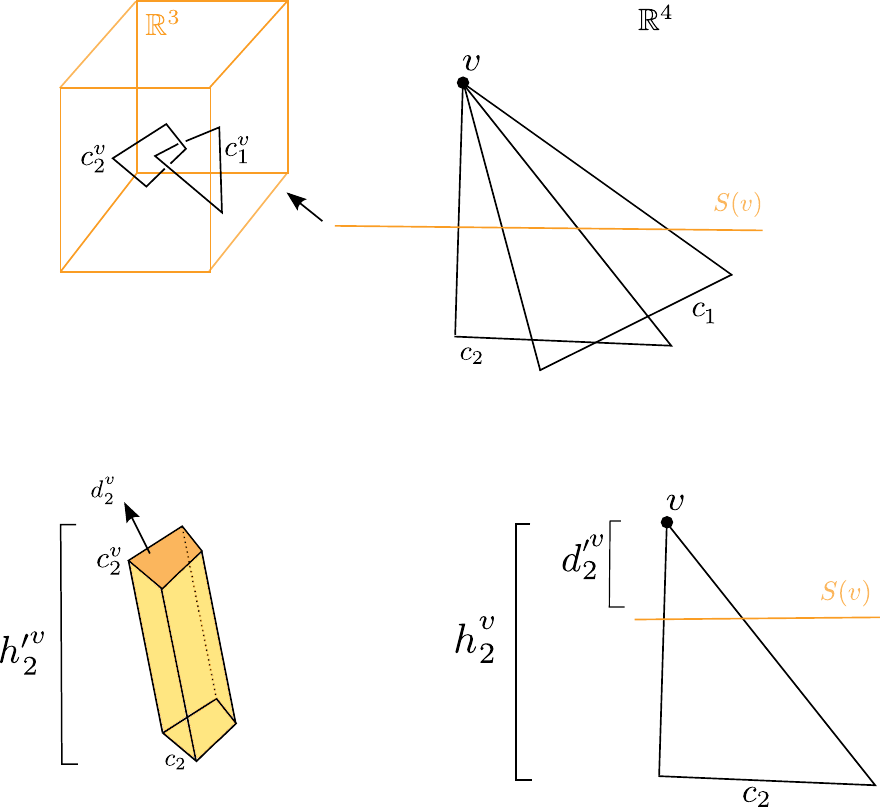}
 % : 422x387 pixel, 72dpi, 14.89x13.65 cm, bb=0 0 422 387
 \caption{guide to the notation for $d=2$}
 \label{fig:notation}
\end{figure}

Let $d^v_1,d^v_2 \in C(S(v))$ be such that $\partial d^v_1 = c^v_1, \partial d^v_2 = c^v_2$. Then by definition we have
$$\lambda = \link(c^v_1, c^v_2) = \inter(c^v_1,d^v_2).$$
Here and in the following the ambient space in which a linking or an intersection number is computed is clear from the arguments of $\link(,)$ and $\inter(,)$.

Assume $c_1 =  \sum_{i=1}^{m} n_i \sigma_i$, were the $\sigma_i$ are oriented simplices and the $n_i$ are non-zero integers. If the oriented $d$-simplex $\sigma$ is defined by $v_0 \dotsc v_{d}$ then define $v \sigma$ to be the oriented $(d+1)$-simplex $v v_0 v_1 \dotsc v_{d}$, i.e., the cone over $\sigma$ oriented as indicated.
Let $h^v_1 \in C(\Rspace^{2d})$ be the chain $\iota_{\sharp}(v c_1)$, where $vc_1 = \sum_{i=1}^m n_i v\sigma_i$. Let $h^v_2$ be the same for $c_2$.  One easily computes that $\partial(vc_1) = \pm c_1$.
The sign is not changed if the convention is that the oriented boundary simplex $p_{d-1} \ldots p_0$ is considered positively oriented with respect to the oriented simplex $p_{d}p_{d-1} \ldots p_0$, we assume it is the case. It then follows that $\partial h^v_1 = \iota_\sharp(c_1)$.

Observe that the image of the complex inside $B(u)$ for any vertex $u$, is the linear cone over $S(u) \cap \iota(|K|)$ from $\iota(u)$.
Consider the singular chain $d'^v_2 \in C(B(v)) \subset C(\Rspace^{2d})$ whose image is the cone over $|c^v_2|$ from $\iota(v)$, $c^v_2 \in S(v)$, and with $\partial d'^v_2 = c^v_2$. This singular chain clearly exists and is easy to construct from $c^v_2$. Define $h'^v_2 = h^v_2-d'^v_2+d^v_2 \in C_d(\Rspace^{2d})$. We claim that the intersection numbers satisfy

\begin{equation} \inter(h^v_1, h'^v_2) = \pm \inter (c^v_1, d^v_2) \label{eq:intersection} \end{equation}

where the intersection number in the right-hand-side is computed in $S(v)$.
Assume there exists a vertex $w \neq v$, as in the statement of the lemma i.e. such that $c_1$ and $c_2$ appear in the link of $w$, with $\iota_w: |L(w)| \rightarrow S(w)$.
Assuming \eqref{eq:intersection} we argue as follows that $\link(c^w_1, c^w_2) = \pm \lambda$ in $S(w)$. Note that
this latter condition implies that $c^w_1$ and $c^w_2$ are defined.

To see this claim, we write
\begin{equation} \inter(h^w_1 - h^v_1, h'^w_2-h'^v_2) = \\
\inter(h^w_1, h'^w_2) - \inter(h^w_1, h'^v_2) - \inter(h^v_1,h'^w_2) + \inter(h^v_1, h'^v_2)\label{eq:cycleinter} .
\end{equation}

The two middle terms of the right-hand side are zero as follows. We have $$\inter(h^w_1, h'^v_2) = \inter(h^w_1, h^v_2) + \inter(h^w_1, - d'^v_2 + d^v_2).$$
The underlying spaces of the chains $h^w_1$ and $h^v_2$ are disjoint and hence these two chains have intersection number zero. The second term is also zero since the cycle $d^v_2 - d'^v_2$ is inside the ball bounded by $S(v)$ and is disjoint
from $h^w_1$ by the choice of the spheres. The third term of the right-hand side of \eqref{eq:cycleinter} can similarly be shown to be zero.

Since the two cycles $h^w_1 - h^v_1$ and $h'^w_2-h'^v_2$ must have intersection number $0$, it follows that $\inter(h^w_1, h'^w_2) = -\inter(h^v_1, h'^v_2)$. And from \eqref{eq:intersection} it must be
that $\inter(h^w_1, h'^w_2) = \pm \inter(c^w_1, d^w_2)= \pm \link(c_1^w, c_2^w)$, and similarly, $\inter(h^v_1, h'^v_2) = \link(c^v_1, c^v_2)$.
Therefore, $\link(c^w_1, c^w_2) = \pm \lambda$.

It remains to show \eqref{eq:intersection}. Since by assumption the map $\iota$ restricted to $\iota^{-1}(\hat{B}(v))$ is linear, where $\hat{B}(v)$ is a ball slightly larger than $B(v)$, any transverse intersection between $|c^v_1|$ and $|d^v_2|$ gives rise to a transverse intersection between $|h^v_1|$ and $|d^v_2| \subset |h'^v_2|$.
The rest of $|h'^v_2|$ is disjoint from $|h^v_1|$. Hence, the set of intersection points of $|h^v_1|$ and $|h'^v_2|$ is the same as the set of intersection points of $|c^v_1|$ and $|d^v_2|$ in $S(v)$.
 We thus only need to argue that the intersections have the same signs in $\Rspace^{2d}$ as in $S(v)$, or all of the signs are changed. The sign of an intersection depends on three orientations, the two of the intersecting chains around the intersection point and the orientation of the ambient space. The chains $d^v_2$ and $h'^v_2$ have the same orientations around intersection points. The orientations of the $S(v_i)$ are induced by that of  $\Rspace^{2d}$ and hence are fixed. We thus must show that the orientations of the simplices of $h^v_1$ are determined naturally around the intersection points by those of $c^v_1$. But this is the case since the orientations of the simplices of $h^v_1$ inside the ball $\hat{B}(v)$ are defined from those of $c^v_1$ by the natural coning process sending $\iota_v(\sigma_i)$ to $\iota(v) \iota_v(\sigma_i)$.

Next we prove the last part of the lemma.
From the above it follows that the $2$-cycle $s_1 = h^v_1-h^w_1$ has linking number $\pm \lambda$ with $c^v_2$. Since $c^v_2 - \iota_\sharp(c_2)$ bounds a chain which is disjoint from $s_1$, it follows that $\link(s_1,\iota_\sharp(c_2))= \pm \lambda$.
If a third vertex $u$ exists such that $c_2$ is in its link-complex, then $\iota(uc_2)$ would be a chain disjoint from $s_1$ and bounding $c_2$. This contradicts the fact that $\lambda \neq 0$. Symmetrically the argument works for $c_1$ as well.$\square$
\end{proof}

Theorem \ref{theorem:basic} is an immediate corollary of the above Lemma.

\paragraph{Remark}
We have presented our main lemma above in a setting that provides more information that is needed for Theorem \ref{theorem:basic}. The reader will realize that, for instance, for chains with $\Zring_2$ coefficients and $\Zring_2$ intersection numbers, the proof simplifies.

\subsection{Planarity for $d=2$}

\begin{Theorem}
Let $K$ be a $2$-complex embedded in $\Rspace^{4}$. Let $L$ be a 1-subcomplex that is the intersection of three link subcomplexes of $K$. Then, $L$ is a planar graph.
\end{Theorem}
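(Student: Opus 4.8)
The plan is to argue by contradiction, combining Kuratowski's forbidden-subgraph characterization of planarity with the van Kampen--Flores non-embeddability of certain simplicial joins. First I would isolate the one structural fact that drives everything. If $L \subseteq L(v_1)\cap L(v_2)\cap L(v_3)$ for three distinct vertices $v_1,v_2,v_3$, then by the very definition of the link-complex, for every simplex $\sigma$ of $L$ and every $j$ the cone $v_j\sigma$ is a simplex of $K$. Hence the simplicial join $N := \{v_1,v_2,v_3\} * L$ (three cone points with no edges among them, each coned over all of $L$) is a genuine subcomplex of $K$, so $\iota$ restricts to a PL embedding $N \hookrightarrow \Rspace^4$. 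The three apices are distinct from one another and from $V(L)$ (a vertex does not lie in its own link), so $N$ is an honest join on $3+\#V(L)$ vertices.

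Now suppose toward a contradiction that $L$ is not planar. By Kuratowski's theorem $L$ contains a subgraph $H$ that is a subdivision of $K_5$ or of $K_{3,3}$, and restricting the cones gives a subcomplex $N_0 := \{v_1,v_2,v_3\} * H \subseteq N$, still PL embedded in $\Rspace^4$. Because subdivision does not change the underlying space, and the topological join commutes with realization, $|A*B| = |A|*|B|$, the space $|N_0|$ is homeomorphic to $|\{v_1,v_2,v_3\} * K_5|$ or to $|\{v_1,v_2,v_3\} * K_{3,3}|$. I would then identify these with standard minimal non-embeddable complexes: writing $K_{3,3}$ as the join of its two colour classes gives $\{v_1,v_2,v_3\}*K_{3,3} \cong (\{3\text{ pts}\})^{*3}$, the threefold join of three-point sets, while $\{v_1,v_2,v_3\}*K_5$ is the join of a three-point set with the $1$-skeleton $[\Delta^4]^{(1)}=K_5$ of the $4$-simplex.

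The heart of the argument is that neither of these $2$-complexes embeds in $\Rspace^4$. This is exactly the van Kampen--Flores phenomenon: the $\mathbb{Z}_2$ van Kampen obstruction of a three-point set is nonzero (obstructing $\Rspace^0$), that of $K_5=[\Delta^4]^{(1)}$ and of $K_{3,3}$ is nonzero (obstructing $\Rspace^2$), and the obstruction is multiplicative under joins; hence the obstruction of each of our complexes is a nonzero class obstructing $\Rspace^{2(0+1+1)}=\Rspace^4$. Concretely, for $(\{3\text{ pts}\})^{*3}$ one may instead verify directly, in Radon/Tverberg fashion, that for any map to $\Rspace^4$ the mod-$2$ number of intersecting pairs of vertex-disjoint triangles $a_ib_jc_k$, $a_{i'}b_{j'}c_{k'}$ (with $i\neq i'$, $j\neq j'$, $k\neq k'$) is odd, so some such pair must meet. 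Since the images of vertex-disjoint simplices under an embedding are disjoint, such a forced intersection is impossible; equivalently, it contradicts the fact used already in this paper that disjoint chains (and indeed any two $2$-cycles) have intersection number $0$ in $\Rspace^4$. Either way $|N_0|$ cannot embed in $\Rspace^4$, contradicting that it is a subcomplex of the embedded $K$. Hence $L$ is planar.

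The main obstacle is precisely the non-embeddability input of the last paragraph. Invoking it as a black box (van Kampen--Flores, or multiplicativity of the van Kampen obstruction under joins) is quick but hides the real content; carrying out the self-contained Radon-type parity count requires setting up the deleted-join/equivariant framework and checking the relevant sign or parity, which is where essentially all the work sits. A secondary technical point to watch is that Kuratowski yields only a \emph{subdivision} of $K_5$ or $K_{3,3}$, not the graph itself; this is harmless because non-embeddability is a property of the underlying space and subdivision leaves $|N_0|$ unchanged up to homeomorphism, but it is the reason I would phrase the final contradiction topologically rather than purely simplicially. I note that this route genuinely needs the two cone points beyond linklessness: $K_5$ and $K_{3,3}$ contain no two vertex-disjoint cycles, so the conservation-of-linking mechanism of Lemma \ref{lemma:basic} alone (which only yields Theorem \ref{theorem:basic}) cannot detect them, and it is the extra apices supplied by the \emph{triple} intersection that manufacture the obstructing join.
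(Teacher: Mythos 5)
Your proposal is correct and follows essentially the same route as the paper: reduce via Kuratowski to a subdivision of $K_5$ or $K_{3,3}$ inside $L$, join it with the three apex vertices to obtain a subcomplex of $K$ homeomorphic to $K_5 * [3]$ or $[3]*[3]*[3]$, and invoke the Gr\"unbaum/van Kampen--Flores non-embeddability of these joins in $\Rspace^4$. The only difference is that you sketch a proof of that non-embeddability input (multiplicativity of the van Kampen obstruction, or a Radon-type parity count), whereas the paper cites Gr\"unbaum directly.
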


This theorem can be derived easily from the following fact first proved by Gr\"{u}nbaum \cite{Gru69}, see also \cite{Umm73,Mat03}. Let $K^{d_1}_{2d_1+3}, K^{d_2}_{2d_2+3}, \ldots, K^{d_p}_{2d_p+3}$ be
$p$ complexes such that $K^{d_i}_{2d_i+3}$ is the complete $d_i$-complex on $2d_i+3$ vertices, and, $d=d_1+d_2+ \ldots+d_p+p-1$. Then
$$ K' = K^{d_1}_{2d_1+3} * K^{d_2}_{2d_2+3} * \dots * K^{d_p}_{2d_p+3}$$ is a $d$-complex not embeddable in $\Rspace^{2d}$. The complex $K'$ is also minimal in the sense that removing a single $d$-simplex makes it embeddable. In order to show $L$ planar, we must show that as a topological graph it cannot contain a topological $K_{3,3}$ or $K_5$. We have,
$$K_{3,3} = [3]*[3], K_5 = K^1_{2\times 1+3} $$
where $[3]$ is a complex consisting of three disjoint vertices. Thus, if $L$ contains a homeomorphic copy of a $K_{3,3}$ or $K_5$ then $K$ will contain a subcomplex
homeomorphic to $[3]*[3]*[3]$ or $K^1_5 * [3]$.
However, by the result mentioned above these complexes are not embeddable and the theorem is proved.

\section{Bounding the Number of  $d$-Simplices in $d$-Complexes in $\Rspace^{2d}$} \label{ssection:bound}
In this section we use the theorems which restrict the triple intersections of links of vertices to derive an upper bound
on the number of $d$-simplices in a $d$-complex embeddable in $\Rspace^{2d}$.

In the following, we consider first the cases $d=2$ and $d=3$ as examples of the general case and to introduce the intuition behind the proof.
Let $d=2$. For the purpose of bounding the number of triangles, we use the fact that triple intersections of link complexes are planar. It is well-known that a planar graph on $n$ vertices has at most $f(n) = 3n-6$ edges.
Thus, for any embedded 2-complex $K$ in $\Rspace^4$, any triple intersection of links of vertices has at most $f(n) = 3n-6$ edges. From (\ref{formula:faces}) and a basic combinatorial lemma proved in the next section, Lemma \ref{lemma:comb}, it follows that the total number of triangles satisfies $f_2 = O(n^{8/3})$.

A similar result with a slightly worse constant and only for PL embeddings can be obtained using Theorem \ref{theorem:basic}.
This is because the graph $K_6$ is not linklessly embeddable by the well known Conway-Gordon-Sachs theorem. Hence, any linklessly embeddable graph cannot contain a subdivision of $K_6$ and by the extremal results of Mader \cite{Mad68} it has at most $4n+O(1)$ edges.

Next let $d = 3$. From the non-embedd\-ability result of Gr\"{u}nbaum it follows that a triple intersection of link-complexes
cannot include a complex $F = K^1_5 * [3]$. This is because non-embeddability of $K^1_5*[3]*[3]$ (into $\Rspace^6$) can be ``read from the right'' to imply that: In an embeddable complex,
any triple intersection of link-complexes of vertices is such that any triple intersection of link-complexes of it does not include a homeomorphic image of $K^1_5$ as a subgraph. Thus, by the above discussion, any triple intersection of links is
a $2$-complex of size $O(n^{8/3})$. By Lemma \ref{lemma:comb} the total number of $d$-simplices is $O(n n^2 n^{8/9})=O(n^{3+8/9})$.

We now state the general result which is proved similarly.

\begin{Theorem}\label{theorem:main}
Let $f_d(n)$ be the maximum number of $d$ simplices in an $n$-vertex $d$-complex embeddable in $\Rspace^{2d}$, $d>0$. Then
\begin{equation}
 f_d(n) = O(n^{d+1 - \frac{1}{3^{(d-1)}}}).
\end{equation}
\end{Theorem}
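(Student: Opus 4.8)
### Proof plan for Theorem \ref{theorem:main}

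The plan is to prove the bound by induction on $d$, using the triple-intersection structure theorem (Theorem \ref{theorem:basic}) together with the combinatorial counting Lemma \ref{lemma:comb} as the engine that converts a bound on triple intersections into a bound on the total simplex count. The base case is $d=1$, where the claim reads $f_1(n) = O(n^{2 - 1/3^0}) = O(n)$; this is just the linear edge bound for planar graphs (indeed, $f_1(n) \le 3n-6$), so the exponent $2 - 1 = 1$ is correct. For the inductive step I would read off from Theorem \ref{theorem:basic} that in a $d$-complex embeddable in $\Rspace^{2d}$, every triple intersection $L$ of link-complexes embeds linklessly in $\Rspace^{2d-1}$, hence is itself a $(d-1)$-complex that is linklessly embeddable in $\Rspace^{2(d-1)+1}$. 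The key point, as the excerpt notes when discussing the Gr\"unbaum obstruction ``read from the right,'' is that such an $L$ cannot contain the forbidden join $K^1_5 * [3]$, and more generally its own triple intersections of links inherit the same non-embeddability constraints one dimension down. This is what lets the induction hypothesis apply to $L$.

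The main structural steps, in order, are as follows. First, establish that a triple intersection of link-complexes in an embeddable $d$-complex is a $(d-1)$-complex satisfying exactly the hypotheses needed to apply the $(d-1)$-dimensional bound; concretely, show that the number of top $(d-1)$-simplices in any such $L$ on $m$ vertices is $O(m^{(d-1)+1 - 1/3^{(d-2)}}) = O(m^{d - 1/3^{d-2}})$ by the induction hypothesis. Second, feed this per-triple-intersection bound into Lemma \ref{lemma:comb}: the lemma is the combinatorial device (used already for $d=2$ to pass from the planar $O(n)$ bound on triple intersections to $f_2 = O(n^{8/3})$, and again for $d=3$ to reach $O(n^{3+8/9})$) that amplifies a bound of the form $O(m^{\alpha})$ on triple intersections into a global bound on $f_d(n)$. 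Third, verify that the exponent produced by the lemma is precisely $d+1 - 1/3^{(d-1)}$, i.e.\ that the recursion on the exponents has the clean closed form claimed.

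The heart of the matter is the exponent bookkeeping, and I expect that to be the main obstacle. Tracing the worked cases, the pattern is that if triple intersections are bounded by $O(n^{\beta_{d-1}})$ with $\beta_{d-1} = d - 1/3^{d-2}$, then Lemma \ref{lemma:comb} contributes a factor governed by dividing a residual exponent by $3$ (the $3$ coming from the three links whose intersection we control) and adding back the ``free'' dimensions, yielding $\beta_d = d + 1 - \tfrac{1}{3}(d - \beta_{d-1}) $-type arithmetic; the challenge is to set up the recurrence so that it collapses to $\beta_d = d+1 - 1/3^{d-1}$. I would check this by confirming the defect $e_d := (d+1) - \beta_d$ satisfies $e_d = e_{d-1}/3$ with $e_1 = 1$, giving $e_d = 1/3^{d-1}$ as required; the three worked data points ($e_1 = 1$, $e_2 = 1/3$ from $f_2 = O(n^{8/3})$, $e_3 = 1/9$ from $O(n^{3+8/9})$) confirm the geometric decay. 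The one genuinely delicate point, beyond the arithmetic, is justifying that the induction hypothesis is applicable to the triple intersection $L$ one dimension down, i.e.\ that $L$ really does satisfy the embeddability-type constraints that the $(d-1)$-dimensional bound requires; this rests on the ``read from the right'' consequence of Gr\"unbaum's non-embeddability of the iterated joins, and making that inheritance precise is where care is needed.
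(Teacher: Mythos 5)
Your plan follows the paper's proof essentially verbatim: induction on $d$ with the planar edge bound as the base case, the Gr\"unbaum forbidden-join condition inherited by triple intersections of link-complexes, Lemma \ref{lemma:comb} as the amplifier, and the exponent recursion $e_d = e_{d-1}/3$ with $e_1=1$. The one point you flag as delicate is exactly how the paper resolves it---it inducts not on embeddability (nor on linklessness, which alone would not let you invoke the $(d-1)$-dimensional statement for $L$) but on the auxiliary quantity $\phi_d(n)$ for complexes containing no subcomplex homeomorphic to $K_5 * [3] * \cdots * [3]$ with $d-1$ factors, a condition that is hereditary under passing to triple intersections of links and dominates $f_d(n)$ by Gr\"unbaum's theorem.
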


\begin{proof}
For the case $d=1$ the above reduces to $f_1(n) = O(n)$ hence the classical bound on the number of edges of a planar graph. So we assume that $d>1$.
Let $\phi_d(n)$ be the maximum number of $d$-simplices in an $n$-vertex $d$-complex that does not have a subcomplex homeomorphic to $K_5 * [3] * \cdots * [3]$ with
$d-1$ factors of $[3]$. By the results of Gr\"{u}nbaum $f_d(n) \leq \phi_d(n)$ and we bound $\phi_d(n)$. For $d=2$, the condition that the complex does not contain
a homeomorphic copy of $K_5 * [3]$ implies that triple intersections of link-graphs are graphs that do not contain a subdivision of $K_5$. By the early results of Mader \cite{Mad68}
such graphs have $O(n)$ edges. Then, from Lemma \ref{lemma:comb} it follows that $\phi_2(n) = O(n^{8/3})$.

Now consider the case of general $d$. The condition that a $d$-complex does not have a subcomplex homeomorphic to $K_5 * [3] * \cdots * [3]$ with
$d-1$ factors of $[3]$ implies that triple intersections of link-complexes do not contain subcomplexes homeomorphic to $K_5 * [3] * \cdots * [3]$ with $d-2$ factors
of $[3]$. Hence, triple intersections of link complexes are $(d-1)$-complexes over at most $n-1$ vertices whose number of $(d-1)$-simplices is bounded by $\phi_{d-1}(n)$.
Therefore, we can apply Lemma \ref{lemma:comb} again and we obtain

\begin{equation}
\begin{split}
 \phi_d(n) \leq c n n^{\frac{2d}{3}} \phi_{d-1}^{\frac{1}{3}}(n) \leq c c' n n^{\frac{2d}{3}} n^{\frac{1}{3}(d - \frac{1}{3^{d-2}})} \\
 =cc' n^{d+1 - \frac{1}{3^{d-1}}}.
\end{split}
\end{equation}

In the above $c$ is a constant coming from the combinatorial lemma and $c'$ is the constant in the asymptotic bound for $\phi_{d-1}(n)$, so in general the constant in the notation depends on $d$ $\square$.
\end{proof}

\subsection{Size of Linklessly Embeddable $d$-Complexes in  $\Rspace^{2d+1}$}
Using the linklessness criterion of Theorem \ref{lemma:basic} one can prove in a way similar to the proof of Theorem \ref{theorem:main}, a stronger result than Theorem \ref{theorem:main} for continuous embeddings.
By \cite{Sko03}, Lemma 1, if a $d$-complex embeds linklessly in $\Rspace^{2d+1}$ then it cannot contain a subcomplex homeomorphic to
$[4] * \cdots *[4]$, $d+1$ factors. Using an inductive argument as in the proof of Theorem \ref{theorem:main}, it follows that the a similar asymptotic bound proved
above also applies to complexes that are linklessly embeddable in $\Rspace^{2d+1}$. In the argument, $K_5$ is replaced by $[4] * [4]$ and Lemma \ref{lemma:comb} is replaced by a corresponding lemma for 4-wise intersections. Hence,
for the case $d=2$ one needs to bound the number of edges in graphs with no subdivision of $K_{4,4}$ as a subgraph and for this purpose it is enough to consider sizes of graphs with no subdivision of $K_8$
as subgraph. This is because the class of graphs with no subdivision of $K_{4,4}$ as a subgraph is contained in the class of graphs with no subdivision of $K_8$ as a subgraph.

Note that here for $d>1$ the codimension is at least
three and a continuous embedding can always be approximated by a PL one \cite{Bry72}.

\begin{Theorem}
Let $g_d(n)$ be the maximum number of $d$-simplices of an $n$-vertex $d$-complex linklessly embeddable in $\Rspace^{2d+1}$, $d>0$. Then
\begin{equation}
 g_d(n) = O(n^{d+1 - \frac{1}{4^{(d-1)}}}).
\end{equation}
\end{Theorem}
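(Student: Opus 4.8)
The plan is to rerun the induction behind Theorem~\ref{theorem:main} verbatim, substituting Gr\"{u}nbaum's non-embeddable join by the linkless obstruction of Skopenkov. First I would record the forbidden configuration: by \cite{Sko03}, Lemma~1, a $d$-complex that embeds linklessly in $\Rspace^{2d+1}$ contains no subcomplex homeomorphic to the $(d+1)$-fold join $[3]*\cdots*[3]$ of three-point complexes. Accordingly I would set $\psi_d(n)$ to be the maximum number of $d$-simplices of an $n$-vertex $d$-complex that has no subcomplex homeomorphic to this $(d+1)$-fold join, so that $g_d(n)\le\psi_d(n)$ for $d\ge 2$, and the theorem reduces to bounding $\psi_d(n)$. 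Since for $d>1$ the codimension $2d+1-d=d+1$ is at least three, a continuous embedding can be replaced by a PL one after \cite{Bry72}, so working with the purely combinatorial (forbidden-subcomplex) formulation costs nothing; the case $d=1$ is the classical bound, linklessly embeddable graphs in $\Rspace^3$ being $K_6$-minor-free and hence of size $O(n)$.

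The structural heart is the same ``reading from the right'' step used for Theorem~\ref{theorem:main}. I would show that if a $d$-complex $K$ has no subcomplex homeomorphic to the $(d+1)$-fold join of $[3]$'s, then each triple intersection $L=L(v_1)\cap L(v_2)\cap L(v_3)$ of link-complexes has no subcomplex homeomorphic to the $d$-fold join of $[3]$'s. Indeed, a subcomplex $L'\subseteq L$ homeomorphic to the $d$-fold join would give $\{v_1,v_2,v_3\}*L'\subseteq K$; because $v_i\notin L(v_i)\supseteq L'$ the three apexes are disjoint from $L'$, so the join is genuine and $\{v_1,v_2,v_3\}*L'$ is homeomorphic to the $(d+1)$-fold join of $[3]$'s, contradicting the hypothesis on $K$. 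Consequently every triple intersection of links of a $\psi_d$-extremal complex is a $(d-1)$-complex on at most $n-1$ vertices with no $d$-fold join of $[3]$'s, hence carries at most $\psi_{d-1}(n)$ of its $(d-1)$-simplices.

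The recursion then assembles exactly as in Theorem~\ref{theorem:main}. The bottom of the induction is the graph level: the forbidden configuration for a graph is $[3]*[3]=K_{3,3}$, so $\psi_1(n)$ is the maximal number of edges of an $n$-vertex graph with no $K_{3,3}$-subdivision; since a $K_6$-subdivision contains a $K_{3,3}$-subdivision, such a graph also avoids a $K_6$-subdivision and therefore has $O(n)$ edges by Mader's theorem \cite{Mad68}, giving $\psi_1(n)=O(n)$. For the inductive step I would combine the bound ``every triple intersection has at most $\psi_{d-1}(n)$ top simplices'' with Formula~\ref{formula:faces} and the counting Lemma~\ref{lemma:comb} to obtain the same recursion $\psi_d(n)\le c\,n\,n^{2d/3}\,\psi_{d-1}^{1/3}(n)$ that controls $\phi_d$ there. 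Solving with $\psi_1(n)=O(n)$ reproduces $\psi_d(n)=O(n^{d+1-1/3^{(d-1)}})$ by the identical arithmetic, whence $g_d(n)=O(n^{d+1-1/3^{(d-1)}})$.

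Because every ingredient is parallel to Theorem~\ref{theorem:main}, I expect no genuinely new obstacle; the care goes into two bookkeeping points. The first is invoking the correct linkless obstruction---$(d+1)$ copies of $[3]$ from \cite{Sko03} in place of Gr\"{u}nbaum's $K_5*[3]*\cdots*[3]$---and checking that the ``reading from the right'' reduction lands on the right forbidden subcomplex one dimension down. The second, which is the only place the two proofs really differ, is the base case: one must verify that replacing the excluded $K_5$-subdivision by an excluded $K_{3,3}$-subdivision still yields a linear edge bound; the reduction to the $K_6$-subdivision estimate of \cite{Mad68} settles this. Since both base cases are linear in $n$, the recursion has the same solution and the exponent coincides with that of Theorem~\ref{theorem:main}.
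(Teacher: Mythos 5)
Your proposal is correct and follows exactly the route the paper sketches for this theorem: replace Gr\"{u}nbaum's forbidden join by Skopenkov's $(d+1)$-fold join $[3]*\cdots*[3]$, run the same ``reading from the right'' induction with Lemma \ref{lemma:comb}, and settle the base case by reducing $K_{3,3}$-subdivision-freeness to $K_6$-subdivision-freeness via Mader, with the codimension-three PL approximation handling continuous embeddings. No substantive differences from the paper's argument.
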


\subsection{A Combinatorial Lemma} \label{ssection:comb}

This section gives the proof of a combinatorial lemma used in deriving the upper bounds. We produce it here for completeness and do not claim it is new. In this section, we denote the number of elements of a set $S$ by $|S|$.
Let $X = \{x_1, \ldots, x_n \}$ be a finite set and $S = \{S_1, S_2, \ldots, S_m \}$ a collection of subsets of $X$. We are interested in bounding the quantity $t(S) = \sum_{i=1}^{m} |S_i|$ that is a function of $n$ and $m$. The restriction on sets $S_i$ comes from their common intersections. Assume that each triple of distinct sets $S_i,S_j, S_k$ satisfies $$ |S_i \cap S_j \cap S_k| \leq f(n)$$ where $f(n)$ is a function of the total number of elements $n$.

\begin{Lemma} \label{lemma:comb}
For the set systems satisfying the above conditions and $m\geq 3$, $$t(S) = O (m n^{\frac{2}{3}} f(n)^{\frac{1}{3}}+n),$$ and the bound is best possible given only these conditions on the set systems.\footnote{The author thanks A. Kupavskii and A. Skopenkov for finding a counter-example to an earlier incorrect statement of the lemma.}
\end{Lemma}

\begin{proof}
Let $\kappa_i$ be the number of sets that the element $x_i$ belongs to. The totality of elements for which $\kappa_i <6$ contribute at most $6n$ to $t(S)$. Therefore, in the following we assume that $\kappa_i \geq 6$ for all $i$. To prove the lemma, we bound the quantity $\sum_{\{i,j,k \}} |S_i \cap S_j \cap S_k|$ in two ways. First, since there are ${m \choose 3}$ summands with each having at most $f(n)$ elements we have

\begin{equation*}
\sum_{\{i,j,k \}}  |S_i \cap S_j \cap S_k| \leq {m \choose 3} f(n).
\end{equation*}

Let the variable $Y_{lijk} = 1$ when $x_l$ is in $S_i \cap S_j \cap S_k$, $i<j<k$, and zero otherwise. Then, $\sum_{\{i,j,k \}} |S_i \cap S_j \cap S_k| = \sum_{l,i<j<k} Y_{lijk} = \sum_{l} {\sum_{i<j<k} {Y_{lijk}}}$. On the other hand, $\sum_{i<j<k} {Y_{lijk}}$ is the number of triples $(i,j,k), i<j<k,$ such that $x_l$ appears in $S_i \cap S_j \cap S_k$. This number is ${\kappa_l \choose 3}$. Then we have

\begin{equation} \label{eq:2}
 \sum_l {\kappa_l \choose 3} = \sum_{\{i,j,k \}}  |S_i \cap S_j \cap S_k|.
\end{equation}

By the H\"older inequality

$$\kappa_1+\kappa_2 + \ldots + \kappa_n \leq (\kappa_1^3 + \kappa_2^3 + \ldots + \kappa_n^3) ^{\frac{1}{3}} n^{ \frac{2}{3}}.$$

Writing $\kappa = (\sum \kappa_l) / n$, the above becomes
$ n \kappa^3 \leq \kappa_1^3 + \ldots + \kappa_n^3$.

We expand the left hand side of (\ref{eq:2}):
$$ \sum_l {\kappa_l \choose 3} = {\frac{1} {3!}}( \sum_l {\kappa_l^3} -3 \sum_l \kappa_l^2 + 2 \sum_l \kappa_l).$$

From above and the fact that the $\kappa_l$ are non-negative and at least 6 it follows that

\begin{equation}
\sum_l {\kappa_l \choose 3} \geq \frac{1}{2} \frac{1}{3!} n \kappa^3.
\end{equation}

Therefore,
$$ \kappa^3 = O( m^3 n^{-1} f(n)) $$
and since $n \kappa = t(S)$ we obtain
$$ t(S) = O(m n^{\frac{2}{3}} f(n)^{\frac{1}{3}}).$$

If we consider the set of all those set systems for which all the $\kappa_i$ have asymptotically the same order, then the H\"older inequality is tight
and it follows that for those sets $t(S) = \Theta(m n^{2/3} f(n)^{1/3})$. Thus in general, using only the conditions in the lemma this bound cannot be improved. $\square$
\end{proof}

We now discuss set systems that can possibly achieve the bounds of the lemma with parameters that are of interest to us,
i.e., $n = f_0^2, m = f_0, f(n) = n^{1/2} = f_0$. Assume $S_1, \ldots, S_m$ are subsets such that each triple intersection (of distinct sets) $S_i \cap S_j \cap S_k$ has
exactly $f$ elements. Then, we can form another dual system
as follows. Let $Y = \{ S_1, \ldots, S_m \}$ and define $T_i$ to be those sets $S_j$ that contain the element  $x_i$, for
$i=1, \ldots, n$. Then the sets $T_i$ satisfy the following conditions. Each $3$-set of elements of $Y$ appears in exactly $f$
sets $T_i$. This defines a Steiner system $S_f(3,K;m)$, where $K$ is the set containing sizes of sets $T_i$ and with $n$ sets.

Refer to \cite{Bethetal99} for an extensive account of these and other combinatorial designs. Let $t$ be a positive integer. A \textit{ Steiner system}
$S_\lambda(t,K;v)$ is a set system $(A, \{B_1, \ldots, B_b \})$ over a set $A = \{a_1, \ldots, a_v \}$ such that, each set (or block) has size from $K$.
Moreover, every $t$-subset of $A$ appears in exactly $\lambda$ blocks.

It is clear from the above that any $S_f(3,K;m)$ with $n$ blocks can be used to build $m$ subsets of an $n$-element set such that each triple intersection of them has exactly
$f$ elements. Also, ``approximate'' Steiner systems are enough for our purposes, however, we are not aware of explicit descriptions of Steiner systems $S_{\lambda}(3,K; m)$
with roughly $f^2_0$ blocks such that $m$ and $\lambda$ are also approximately $f_0$. See \cite{Bethetal99}, Appendix 5 for a table of known Steiner systems.

\paragraph{Remark}
If we are given a sequence of set systems achieving the upper bound in Lemma \ref{lemma:comb}, then we can build as follows a simplicial complex
whose triple intersections of links have at most $f(n)$ elements each and with $t(S)$ triangles. Let $f_0$ be the number of vertices. Take the set from the lemma with $m = f_0, n = f_0^2,
f(n) = f_0$ (if such set systems exist). Then, simply make a graph over $2f_0$ vertices, so that the elements of the set $X$ can be identified with the edges. Then, one introduces a vertex for each set $S_i$ and cones over the
corresponding set of edges. The resulting complex has the required properties, i.e, the triple intersections
of link-complexes have at most $f_0$ edges and the complex contains $\Theta(f_0^{8/3})$ triangles.

\section{Discussion}
We have shown certain restrictions on intersections of link-complexes of vertices in embeddable simplicial complexes.
There is an obvious direction to continue this research and that is to find more restrictions of the type introduced here.

Let's say a simplicial complex is \textit{$d$-planar} if it embeds in the euclidean $d$-space.
It is natural to strengthen the linklessness criterion to $(2d-2)$-planarity for all $d>1$. It could well be possible in addition
to the case $d=2$ shown above, in dimensions $2d-2$, where the embeddability is characterized by the van Kampen obstruction, i.e., when $2d-2 \neq 4$.
One shows that if a complex has van Kampen obstruction nonzero, then its join by three vertices also has nonzero van Kampen obstruction. Hence, non-embeddable
complexes in euclidean space of dimension $2d-2 \neq 4$ cannot be triple intersections of link-complexes of embeddable $d$-complexes in  $2d$-space. However, we are more interested to know if a proof exists that works for all dimensions
and hence does not use the characterization of embeddable $d$-complexes in $2d$-space, by forbidden minors or the van Kampen class.

It was shown that a triple intersection of link-complexes of vertices of a $2$-complex in $\Rspace^4$ is a planar graph. It is interesting to know if
an embedding of the triple intersection graph in a plane or a $2$-sphere can be obtained from the given embedding of the complex.

\paragraph{Acknowledgements}
The author is indebted to Herbert Edelsbrunner for bringing Lemma \ref{lemma:comb} to his notice. The author also thanks Tamal Dey and
Uli Wagner for helpful discussions about the problem of this paper.

% BibTeX users please use one of
%\bibliographystyle{spbasic}      % basic style, author-year citations
\bibliographystyle{spmpsci}      % mathematics and physical sciences
%\bibliographystyle{spphys}       % APS-like style for physics
%\bibliography{}   % name your BibTeX data base

% Non-BibTeX users please use

\end{document}